\documentclass[preprint,number,sort&compress,12pt]{elsarticle}

\usepackage[utf8]{inputenc}
\usepackage[T1]{fontenc}
\usepackage{amsmath,amssymb}
\usepackage{amsthm}
\usepackage{booktabs}
\usepackage{graphicx}
\usepackage{microtype}
\usepackage{xcolor}
\usepackage[colorlinks=true,linkcolor=blue!60!black,citecolor=blue!60!black,urlcolor=blue!60!black]{hyperref}
\usepackage{enumitem}
\setlist{itemsep=2pt,topsep=4pt}

\theoremstyle{plain}
\newtheorem{proposition}{Proposition}
\newtheorem{claim}{Claim}

\newcommand{\graft}{\textsc{Graft}}
\newcommand{\fgraft}{\textsc{FGraft}}
\newcommand{\efh}{\ensuremath{\mathit{ef}_h}}
\newcommand{\dq}[1]{\ensuremath{d@#1}}
\newcommand{\G}{\,G}

\begin{document}

\begin{frontmatter}

\title{Batched Feedback and the Random-Access Wall\\in Search-Based Graph Construction}

\author{\'Edgar Ch\'avez}
\ead{elchavez@cicese.edu.mx}
\affiliation{organization={Computer Science Department, CICESE},
              city={Ensenada}, country={M\'exico}}

\begin{abstract}
Navigable graphs for nearest-neighbor search are built either
\emph{incrementally}, each inserted point searching a graph that mutates as
construction proceeds, or in \emph{batch} over a fixed substrate, which buys
determinism and parallelism at a price in build time. We measure that price
and find where it comes from. Instrumenting the build-side distance counters
of a tuned Vamana and of PiPNN, and building every system repeatedly in a
paired design on one 64-thread machine, we separate build time into
\emph{work} (distance evaluations) and \emph{cost per evaluation}. Letting the
batch builder's substrate mutate in $B$ synchronous blocks (\fgraft) recovers
the feedback loop of incremental construction while the graph stays a
function of (data, seed, parameters, $B$): eight trees with batched feedback
match thirty-two frozen trees, the build does $0.88\times$ Vamana's distance
work, and coarse batching ($B{=}8$) suffices. Yet the build takes
$1.55\times$ Vamana's wall-clock, because the mutating substrate costs more
per evaluation. Pushing further, we find a wall that no search-based builder
crosses: every one of them, incremental or batched, evaluates distances
\emph{adaptively}, one dependent random access at a time, and runs 20--27$\times$
below a dense kernel on the same machine at $d\approx100$, five to six times
of it with the data resident in cache. The gap is not a low-dimension
artifact: measured against dimension, an adaptive evaluation costs $d^{1.04}$
and a blocked one $d^{0.63}$, so the wall grows as $d^{0.4}$ and is twice as
high at $d{=}960$ as at $d{=}128$. PiPNN's order-of-magnitude build advantage is that
kernel: it evaluates as many distances per point as our beam does, but as
fixed-in-advance dense blocks. We show the beam cannot be batched after the
fact (the useful density of a lockstep block is 2--3\%), state the wall as a
two-ceiling roofline, and delimit its scope: it binds whenever the distance is
a black box or the evaluation order is data-dependent, and it is absent only
for decomposable distances evaluated on pair sets fixed in advance.
\end{abstract}

\begin{keyword}
similarity search \sep approximate nearest neighbor \sep navigable graphs \sep
parallel index construction \sep determinism \sep roofline
\end{keyword}

\end{frontmatter}

\section{Introduction}\label{sec:intro}

\graft~\cite{chavez2026graft} removes mutation from search-based graph
construction. It builds a forest of $T$ spatial-approximation
trees~\cite{navarro2002sat}, unions their edges into a fixed, redundant
\emph{scaffold}, and \emph{harvests}: every point, in parallel, beam-searches
the frozen scaffold with itself as the query, and the vertices that search
expands are occlusion-pruned into its adjacency list. The construction is
embarrassingly parallel and bitwise deterministic at any thread count, and at
matched recall its graphs are within a few percent of tuned
Vamana~\cite{subramanya2019diskann,manohar2024parlayann} in distances per
query. It is also slower to build. This paper asks where that time goes and
finds two answers at two different scales.

The first answer is algorithmic and partly good news. A frozen substrate must
be routable, at full scale, for the first point it serves, and that
routability is bought with redundancy: more trees. On GloVe the frozen design
needs $T{=}32$, and the trees alone cost $75\%$ of what Vamana spends in
total. An incremental builder never pays this: every insertion improves the
substrate the next one searches. We show that this feedback loop is separable
from the serial dependency chain that produces it. \fgraft{} mutates the
substrate in $B$ synchronous blocks; within a block it is frozen and the
writes are disjoint, so the graph remains a pure function of
(data, seed, parameters, $B$). Measured with a paired, repeated-build protocol
on one 64-thread machine (\S\ref{sec:results}), eight trees with batched
feedback reach the quality of thirty-two frozen trees, the build performs
$0.88\times$ Vamana's distance work, $B{=}8$ captures the whole gain, and the
build time drops to $0.58\times$ that of the frozen quality profile. It does
\emph{not} drop below Vamana's: $1.55\times$ [1.55, 1.56], because the
mutating, denser substrate costs $1.8\times$ per distance evaluation where the
frozen one costs $1.3\times$.

The second answer is architectural and is the contribution we consider
lasting. Chasing the remaining factor, we measured what a search-based
builder can do at all (\S\ref{sec:wall}). Every such builder, incremental
(HNSW, Vamana) or batched (\graft, \fgraft), evaluates distances
\emph{adaptively}: which vector is touched next depends on the last result.
Each evaluation is therefore a dependent random access, and its rate is
bounded twice, by the per-access bookkeeping that adaptivity forces and by
the random-access bandwidth of the memory that holds the data. On our machine
two independently engineered search builders sit within $1.3\times$ of each
other and 20--27$\times$ below a dense GEMM at $d\approx100$, and five to six
times of that gap is already present with the data in cache. Sweeping the
dimension from $96$ to $960$ shows the ratio growing, not shrinking, as
$d^{0.4}$ (Section~\ref{sec:dim}): the corpora with the largest gap are the
embedding corpora. PiPNN~\cite{rubel2026pipnn}, whose
build is $14\times$ faster than Vamana's, evaluates as many distances per
point as our beam does; its advantage is that its pair sets are fixed in
advance and computed as dense blocks. We then ask whether the beam can be
batched after the fact, and measure that it cannot: the useful density of a
lockstep block of neighboring points is 2--3\%, the beams of adjacent points
diverge immediately, and the SAT subtrees that define coherent blocks are not
compact in the scaffold.

The wall has a clean statement and a clean scope. \emph{Refining a graph
construction whose distance evaluations are not coherent, because the
distance is a black box or because the evaluation order is data-dependent,
is bounded by random-access throughput, not by arithmetic.} It binds for
every metric-space index and for every search-based vector index; it is
absent only when the distance decomposes into inner products \emph{and} the
pair set is known before it is evaluated. Counting distance evaluations,
the natural cost model of metric-space search, mismeasures the vector case
by a factor of twenty.

\section{Setting and protocol}\label{sec:setting}

\paragraph{Systems} \graft{} and \fgraft{} are the \texttt{fg} binary of
the public \texttt{graft-ann} repository (release v0.3 plus the mutation
patch). Baselines are ParlayANN's Vamana~\cite{manohar2024parlayann} with the
authors' parameters (GloVe $R{=}100$, $L{=}200$, $\alpha{=}1.0$, two passes;
SIFT $R{=}64$, $L{=}128$, $\alpha{=}1.15$, two passes), PiPNN with the same
$R$, $L$~\cite{rubel2026pipnn}, and hnswlib~\cite{malkov2020hnsw} at
$M\in\{16,32\}$, $ef_c{=}200$. Datasets are ann-benchmarks'
GloVe-100~\cite{pennington2014glove} ($1.18$M, angular, L2-normalized so that
every system solves the same Euclidean problem) and SIFT-128
($10^6$)~\cite{aumuller2020annbenchmarks};
all systems are graded against the same brute-force answer key.

\paragraph{Machine} A 4-socket Xeon E7-4809 v3 (Haswell, 2.0 GHz, AVX2,
32 cores / 64 threads, 20 MB L3 per socket, 1.5 TB RAM), otherwise idle,
with no thermal drift (build-time IQR $\le 0.5\%$ over ten repetitions).

\paragraph{Cost model} Build cost is reported two ways. \emph{Work} is the
number of distance evaluations, read from each system's own build-side
counter: \texttt{fg} counts every evaluation; ParlayANN's Vamana computes the
count and we patched one line to print it; for PiPNN we added counters to its
three evaluating phases (\S\ref{sec:pipnn}). \emph{Time} is each system's own
build timer (data already in RAM), measured with a paired, repeated-build
design: ten blocks, every system built once per block in a seeded random
order, ratios reported as Hodges--Lehmann estimates with exact 95\% confidence
intervals and exact Wilcoxon signed-rank tests over the paired blocks. The
quotient of the two ratios is the \emph{cost per evaluation} relative to
Vamana. Quality is distances per query at a matched recall target,
interpolated between adjacent beam widths (\dq{r}); a dash means the sweep did
not bracket the target. Local completeness is the fraction of a point's true
10-NN present in its adjacency list.

\section{Batched feedback}\label{sec:method}

Let $S_0$ be the \graft{} scaffold: the union of $T$ SAT trees' edges,
symmetrized. Partition the points into $B$ blocks by a fixed rule ($p \bmod
B$). For $b = 0,\dots,B-1$: harvest the points of block $b$ against $S_b$,
exactly as \graft{} does (beam search of width \efh{} from the tree roots,
occlusion prune~\cite{harwood2016fanng,subramanya2019diskann} of the expanded
set to at most $R$ neighbors), writing each
point's row into its own slot; then
\[
S_{b+1} \;=\; S_b \;\cup\; \{(p,u),(u,p) : u \in A[p],\ p \in \text{block } b\}.
\]
The final graph is the harvested rows, symmetrized and re-pruned, plus the
tree-0 spine. Two design points matter. \emph{Union, never replace}: the
substrate needs redundancy and the output needs sparsity; they are different
objects, and replacing the scaffold by the pruned graph collapses quality
(measured in~\cite{chavez2026graft}). \emph{Coarse blocks}: $B$ substrate
rebuilds, not $n$, which is what keeps the scheme synchronous and cheap.

\begin{proposition}[Schedule independence, blocked]\label{prop:det}
Under the determinism rules of~\cite{chavez2026graft} (keyed randomness,
disjoint writes with canonical merges, a total order on candidates) and a
block assignment that is a function of the point identifier alone, the
\fgraft{} adjacency is a function of $(S, d, \mathit{params}, \mathit{seed},
B)$: bitwise identical for every thread count and every scheduling of the
parallel loops.
\end{proposition}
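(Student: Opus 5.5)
The plan is induction on the block index $b$, carrying the invariant that the substrate $S_b$ is a deterministic function of $(S, d, \mathit{params}, \mathit{seed}, b)$. The frozen-scaffold result of~\cite{chavez2026graft} is applied inside each block as a black box. That result states that harvesting a fixed substrate under keyed randomness, disjoint writes with canonical merges and a total order on candidates yields rows that do not depend on thread count or schedule. Concretely, I take it in the following per-point form: the row $A[p]$ written by a harvest against a frozen substrate $S'$ is a function of $(p, S', d, \mathit{params}, \mathit{seed})$ alone. The beam search from the tree roots, with ties broken by the total order, and the occlusion prune, with candidates in canonical order, read only $S'$, the data and keyed random values. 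I would state this lemma first, since everything else reduces to it.

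The base case is $S_0 = S$, which is given. For the inductive step, fix $b$ and assume $S_b$ is determined. Because the block assignment ($p \bmod B$, or any function of the identifier alone) does not depend on scheduling, the set of points in block $b$ is determined. During block $b$ the substrate is frozen: no write to $S_b$ happens until the block's parallel loop has joined. So each $A[p]$ with $p$ in block $b$ is determined, by the lemma. Each point writes only its own slot, so the writes are disjoint and no interleaving can change the final contents of the row array. The update $S_{b+1} = S_b \cup \{(p,u),(u,p)\}$ is a set union. I would realize it through the canonical merge, for example by sorting each adjacency list by vertex identifier after the union. Then $S_{b+1}$ is determined as a data structure, including the order of each adjacency list that later beam searches will iterate over, and not merely as an abstract edge set. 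This closes the induction.

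The finalization step takes the harvested rows of all $B$ blocks, symmetrizes them, re-prunes them and adds the tree-0 spine. It is again a parallel map over points with canonical merges, applied to determined inputs. Tree 0 is part of $S$ and its spine is determined by the seed. The final adjacency is therefore a function of $(S, d, \mathit{params}, \mathit{seed}, B)$, bitwise, for every thread count.

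The main obstacle is the representation subtlety in the inductive step. Determinism of the edge set $S_{b+1}$ is not enough, because the beam search in block $b+1$ may depend on the order in which neighbors are visited, and a concurrent append-based union would make that order schedule-dependent. I would first try to discharge this by assuming the determinism rules apply to the substrate update itself: the union is performed as a canonical merge with sorted, deduplicated lists. Failing that, I would argue that the total order on candidates makes the beam's output independent of neighbor iteration order. This can be done by showing that the beam state after each expansion is a function of the set of candidates scored, not of the order in which they were inserted.
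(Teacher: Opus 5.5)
Your argument matches the paper's proof sketch: induction on blocks, the substrate frozen for the duration of a block, each point of block $b$ writing only its own row, search and prune deterministic in $(S_b,p)$, and $S_{b+1}$ obtained by a canonical merge. You are more explicit than the sketch in two places. You argue that the order of each adjacency list in $S_{b+1}$ is fixed, and you treat the final symmetrize/re-prune/spine step. The sketch covers both with ``canonical merge'' and ``no step depends on thread identity, count or interleaving.''

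The one real defect is the base case. In the statement, $S$ is the data set and $d$ the distance; the introduction phrases the same claim as a function of ``(data, seed, parameters, $B$)''. The scaffold $S_0$ is not an input. It is built from the data as the symmetrized union of $T$ SAT trees, with randomness drawn from the seed. So your step ``$S_0 = S$, which is given'' only shows that the adjacency is a function of the scaffold together with $(d,\mathit{params},\mathit{seed},B)$, which is a weaker claim.

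The missing step is that $S_0$ is itself a deterministic, schedule-independent function of $(S,d,\mathit{params},\mathit{seed})$. The paper gets this in the first sentence of its sketch from the cited \graft{} contract: keyed randomness for building the trees, and a canonical merge for their union. The same fact is what makes the tree-0 spine in your finalization step determined. Your remark that the spine is ``part of $S$'' yet ``determined by the seed'' is consistent only under this corrected reading. With that citation added to the base case, your proof is complete.
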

\begin{proof}[Proof sketch]
Induction on blocks. $S_0$ is deterministic by the cited contract. Given
$S_b$, every point of block $b$ searches a substrate that does not change
during the block and writes only its own row; search and prune are
deterministic functions of $(S_b, p)$. $S_{b+1}$ is a canonical merge of
deterministic inputs. No step depends on thread identity, count or
interleaving.
\end{proof}
The gate is executable: \texttt{fg --check-determinism} rebuilds at one and
at 64 threads and compares graph hashes. It passes for $B\in\{0,2,8,32\}$,
and $B{\le}1$ reproduces the published \graft{} hash, so the frozen path is
untouched.

\section{Results: work, time, and the cost of an evaluation}\label{sec:results}

\begin{table}[t]
\centering\small
\caption{GloVe-100, ten paired blocks, 64 threads. Medians; HL = Hodges--Lehmann
ratio of build time against Vamana with exact 95\% CI (every row $p=0.002$,
10/10 blocks agree). Quality is distances per query at matched recall,
relative to Vamana's $9{,}204$ / $13{,}325$.}
\label{tab:glove}
\resizebox{\linewidth}{!}{%
\begin{tabular}{@{}lrrrrrrrr@{}}
\toprule
arm & build s & G dist & compl. & \dq{0.95} & \dq{0.97} & HL time [CI] & work & cost/eval \\
\midrule
\graft{} T32/\efh 600 frozen & 390.6 & 63.87 & 0.590 & 1.021 & 1.047 & 2.68 [2.67, 2.70] & 2.02 & 1.33 \\
\graft{} T16/\efh 400 frozen & 181.3 & 30.28 & 0.550 & 1.103 & 1.164 & 1.24 [1.24, 1.25] & 0.96 & 1.29 \\
\textbf{\fgraft{} T8/\efh 400 $B{=}8$} & \textbf{226.3} & 27.66 & 0.570 & 1.058 & 1.068 & \textbf{1.55 [1.55, 1.56]} & 0.88 & 1.77 \\
\fgraft{} T8/\efh 400 $B{=}32$ & 287.2 & 28.86 & 0.574 & 1.050 & 1.062 & 1.97 [1.96, 1.98] & 0.91 & 2.16 \\
\midrule
Vamana R100/L200 2-pass & 145.9 & 31.56 & --- & 1 & 1 & 1 & 1 & 1 \\
PiPNN R100/L200 & 10.4 & 12.56 & --- & 1.447 & 1.413 & 0.071 & 0.40 & 0.18 \\
hnswlib M16 / M32 & 41.2 / 65.9 & --- & --- & --- & --- & 0.28 / 0.45 & --- & --- \\
\bottomrule
\end{tabular}}
\end{table}

\begin{table}[t]
\centering\small
\caption{SIFT-128, same protocol. Every \graft-family graph is above recall
$0.97$ at the bottom of the beam ladder, so quality is matched at $0.98$ /
$0.99$ (Vamana: $1{,}619$ / $2{,}172$).}
\label{tab:sift}
\resizebox{\linewidth}{!}{%
\begin{tabular}{@{}lrrrrrrrr@{}}
\toprule
arm & build s & G dist & compl. & \dq{0.98} & \dq{0.99} & HL time [CI] & work & cost/eval \\
\midrule
\graft{} T4/\efh 400 frozen & 52.1 & 5.15 & 0.483 & 1.074 & 1.067 & 1.02 [0.99, 1.03]$^\dagger$ & 0.47 & 2.16 \\
\graft{} T16/\efh 400 frozen & 104.3 & 12.08 & 0.498 & 1.002 & 0.990 & 2.01 [2.00, 2.03] & 1.09 & 1.85 \\
\fgraft{} T4/\efh 400 $B{=}8$ & 95.5 & 8.38 & 0.496 & 1.011 & 1.004 & 1.85 [1.83, 1.88] & 0.76 & 2.44 \\
\textbf{\fgraft{} T2/\efh 400 $B{=}8$} & \textbf{89.6} & 8.03 & 0.492 & 1.008 & 1.011 & \textbf{1.73 [1.72, 1.75]} & 0.73 & 2.39 \\
\fgraft{} T4/\efh 400 $B{=}32$ & 122.7 & 8.72 & 0.498 & 1.007 & 0.997 & 2.37 [2.35, 2.39] & 0.79 & 3.01 \\
\midrule
Vamana R64/L128 2-pass & 51.7 & 11.06 & --- & 1 & 1 & 1 & 1 & 1 \\
PiPNN R64/L128 & 8.8 & 9.88 & --- & 1.100 & 1.042 & 0.17 & 0.89 & 0.19 \\
hnswlib M16 / M32 & 29.4 / 37.2 & --- & --- & --- & --- & 0.57 / 0.72 & --- & --- \\
\bottomrule
\end{tabular}}
\\[2pt]{\footnotesize $^\dagger$Wilcoxon $p=0.30$, 5/10 blocks: parity.}
\end{table}

Tables~\ref{tab:glove} and~\ref{tab:sift} carry four findings.

\paragraph{Feedback substitutes for redundancy} On GloVe, $T{=}8$ with
$B{=}8$ reaches completeness $0.570$ and \dq{0.97} within $2\%$ of the frozen
$T{=}32$ profile, at $0.43\times$ its distance work and $0.58\times$
[0.58, 0.58] its build time. On SIFT, \emph{two} trees with feedback reach
the quality of sixteen frozen (parity with Vamana within $1\%$ at both
recall targets), where four frozen trees are $7\%$ behind.

\paragraph{The work is now below Vamana's; the time is not} \fgraft{} does
$0.88\times$ (GloVe) and $0.73\times$ (SIFT) Vamana's distance evaluations and
takes $1.55\times$ and $1.73\times$ its time. The reason is the last column:
the mutating substrate costs $1.8$--$2.4\times$ per evaluation against the
frozen scaffold's $1.3$--$2.2\times$. Later blocks search a denser graph, the
union of the scaffold and every finished row, and its beam is a worse pointer
chase. The distance count predicted $\approx 1.4\times$; the measurement says
$1.55\times$, with a confidence interval that excludes $1.5$.

\paragraph{Coarse batching suffices, and finer batching costs} $B{=}8$ and
$B{=}32$ are within $1\%$ in quality on both datasets, and $B{=}32$ is
$27$--$28\%$ slower. What a point needs is a substrate improved by many
predecessors, not by its immediate predecessor; eight rebuilds are enough,
and each additional rebuild densifies what the remaining points must search.

\paragraph{The regime dependence is a result} Where the frozen scaffold is
already cheap (SIFT's local intrinsic dimensionality is far below
GloVe's~\cite{aumuller2021lid}; $T{=}4$ does $0.47\times$ Vamana's work at time parity),
mutation removes little cost; it buys quality. Where the scaffold is
expensive (GloVe: $T{=}32$), mutation removes work and the per-evaluation
cost takes half of the gain back. On the decision rule fixed before the
campaign, full paper at $\le 1.5\times$ Vamana with parity quality, the
batched design lands just outside on both counts. The next section explains
why the missing factor is not an engineering matter.

\section{The random-access wall}\label{sec:wall}

\subsection{PiPNN's advantage, decomposed}\label{sec:pipnn}

PiPNN~\cite{rubel2026pipnn} builds its graph without searching: a
leader-based clustering tree places every point in a few hundred leaves of
$\approx 340$ points, each leaf is brute-forced with a dense rank-update GEMM,
the leaf $k$-NN graphs are unioned, and an $\alpha$-prune fixes the degree.
Its build is $14\times$ faster than Vamana's on GloVe and $5.9\times$ on
SIFT. Instrumenting the three evaluating phases gives Table~\ref{tab:pipnn}.

\begin{table}[t]
\centering\small
\caption{PiPNN's build distance evaluations by phase (our counters), and the
resulting throughput against the search builders of
Tables~\ref{tab:glove}--\ref{tab:sift}.}
\label{tab:pipnn}
\resizebox{\linewidth}{!}{%
\begin{tabular}{@{}lrrrrrr@{}}
\toprule
 & clustering & leaf GEMM & prune & total G & build s & G eval/s \\
\midrule
PiPNN GloVe & 3.21 & 8.49 & 0.86 & 12.56 & 10.35 & \textbf{1.21} \\
PiPNN SIFT  & 2.20 & 7.29 & 0.39 & 9.88 & 8.82 & \textbf{1.12} \\
\midrule
Vamana GloVe / SIFT & & & & 31.56 / 11.06 & 145.9 / 51.7 & 0.216 / 0.214 \\
\graft{} T16 / T4 frozen & & & & 30.28 / 5.15 & 181.3 / 52.1 & 0.167 / 0.099 \\
\fgraft{} $B{=}8$ (T8 / T2) & & & & 27.66 / 8.03 & 226.3 / 89.6 & 0.122 / 0.090 \\
\bottomrule
\end{tabular}}
\end{table}

Two things follow. First, PiPNN's advantage is only partly less work: on
GloVe it evaluates $0.40\times$ Vamana's distances, on SIFT $0.89\times$, and
in both cases about $10^4$ per point, the same order as our beam ($11.5$k at
\efh 400). Second, the rest is throughput: $1.1$--$1.2$ G evaluations per
second, $5$--$6\times$ the search builders, with $93\%$ of its evaluations
inside GEMMs. Each point sits in $30$ leaves; PiPNN buys quality with
redundant brute force it can afford because the kernel is cheap, and still
trails Vamana by $41$--$45\%$ in distances per query on GloVe.

\subsection{Throughput versus working set}\label{sec:sweep}

If the wall were DRAM bandwidth alone, the search builders' throughput would
collapse when the vectors stop fitting in cache. Table~\ref{tab:sweep} builds
prefixes of GloVe from $9$ MB (inside one socket's L3) to $530$ MB, and times a
float32 GEMM on the same prefix with the same order of pair count.

\begin{table}[t]
\centering\small
\caption{Build distance throughput (G evaluations/s, 64 threads) versus
working set, GloVe prefixes. GEMM: OpenBLAS $X_{[:n]}X_{[:b]}^{\!\top}$ with
$n\,b\approx 2{\cdot}10^9$ pairs; its rate rises with $n$ only because the
matrix shape improves, reaching $\approx 0.9$ TFLOP/s. Read the ratios at
$n \ge 2{\cdot}10^5$.}
\label{tab:sweep}
\resizebox{\linewidth}{!}{%
\begin{tabular}{@{}rrrrrrr@{}}
\toprule
$n$ & MB & \graft{} T16 & Vamana & GEMM & GEMM/\graft & GEMM/Vamana \\
\midrule
20,000 & 9 & 0.250 & 0.291 & 1.47 & 5.9 & 5.1 \\
50,000 & 22 & 0.226 & 0.338 & 1.48 & 6.5 & 4.4 \\
100,000 & 45 & 0.193 & 0.276 & 2.34 & 12.1 & 8.5 \\
200,000 & 90 & 0.171 & 0.238 & 3.41 & 19.9 & 14.3 \\
500,000 & 224 & 0.167 & 0.220 & 4.54 & 27.2 & 20.6 \\
1,183,514 & 530 & 0.165 & 0.216 & 4.45 & 27.0 & 20.6 \\
\bottomrule
\end{tabular}}
\end{table}

\begin{figure}[t]
\centering
\includegraphics[width=0.92\linewidth]{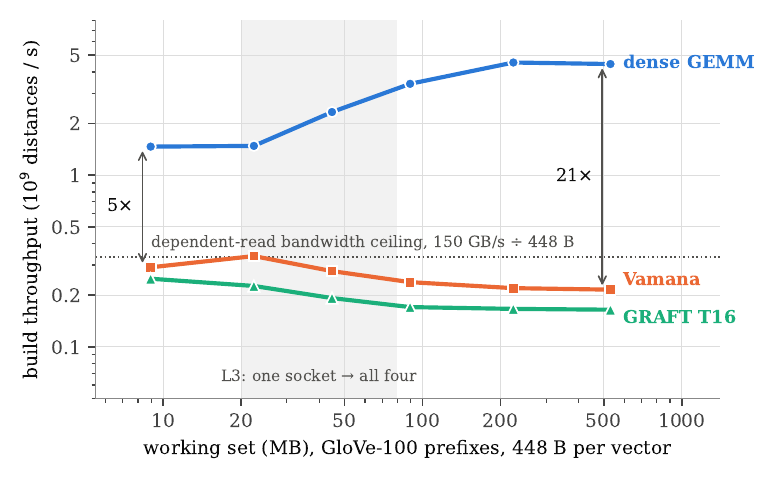}
\caption{The wall, as a function of the working set (Table~\ref{tab:sweep}).
The search builders lose a quarter to a third of their throughput between the
in-cache and the DRAM-resident regime and never approach the dense kernel;
Vamana at full $n$ sits at $65\%$ of the dependent-read bandwidth ceiling.
The GEMM curve rises only because its matrix shape improves with $n$.}
\label{fig:sweep}
\end{figure}

The search rate does not collapse (Figure~\ref{fig:sweep}): from in-cache to
DRAM-resident, \graft{}
loses $34\%$ and Vamana $26\%$, gradually. Already in cache, both evaluate one
distance per $\approx 200$--$250$ core cycles, against $\approx 50$ for a
register-blocked kernel at the efficiency OpenBLAS reaches here. That first
$5$--$6\times$ is \emph{adaptivity overhead}: every evaluation sits inside a
dependency chain (beam heap, visited stamp, best-$R$ tracker, occlusion test)
that must complete before the next vertex can be chosen, so it runs at
latency speed wherever the operand lives. The DRAM boundary adds the remaining
$1.3$--$1.5\times$. At full $n$, Vamana runs at $65\%$ and \graft{} at $49\%$
of the bandwidth ceiling for dependent random reads of one padded vector
($150$ GB/s / $448$ B $= 0.335$ G/s). Two independently engineered search
builders within $1.3\times$ of each other, both $20$--$27\times$ below GEMM at
every $n$, rule out an implementation artifact. These ratios are for
$d = 100$; Section~\ref{sec:dim} measures how they move with the dimension.

\subsection{The model}\label{sec:model}

Write $F$ for the machine's dense arithmetic rate (flop/s achieved by GEMM),
$\mathit{BW}$ for its random-access bandwidth, $d$ for the dimension and
$4d$ for the bytes of a vector. A builder whose pair set $\{(p,q)\}$ is fixed
before evaluation can tile it, so each loaded vector serves a whole tile: its
rate is $F/2d$, and its I/O is $\Theta(\text{pairs}/\sqrt{M})$ for a cache of
$M$ words by the Hong--Kung bound~\cite{hong1981io}. A builder whose next $q$
is a function of the last result cannot tile: each evaluation is one
dependent load, its I/O is $\Theta(\text{pairs})$, and its rate is bounded by
\[
\min\!\Big(\frac{\mathit{BW}}{4d},\ \frac{f_{\text{clk}}}{c_{\text{adapt}}}\Big),
\]
where $c_{\text{adapt}}$ is the per-evaluation bookkeeping in cycles, measured
here at $200$--$250$. The ratio of the ceilings is $2F/\mathit{BW}$ for the
bandwidth term, independent of $d$; with the measured $F$ it is $13\times$ on
this machine, and the two ceilings happen to be of similar height ($0.34$ and
$0.29$ G/s). This is the roofline~\cite{williams2009roofline} with the
arithmetic intensity of adaptive evaluation pinned at its minimum, and the
old memory wall~\cite{wulf1995memorywall} in its random-access form. The gap
widens on newer hardware: wider vector units, quantized GEMM, and GPUs raise
$F$ faster than $\mathit{BW}$ and do nothing for $c_{\text{adapt}}$.

\begin{claim}[The random-access wall]
A graph construction whose distance evaluations are not coherent, because the
distance is a black box, or because the evaluation order depends on previous
results, is bounded by random-access throughput, $\min(\mathit{BW}/4d,\
f_{\text{clk}}/c_{\text{adapt}})$, independently of how few evaluations it
performs. Only a decomposable distance evaluated on a pair set fixed in
advance is bounded by arithmetic.
\end{claim}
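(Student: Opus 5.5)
The plan is to turn the Claim into a statement about a cost model and then prove its two halves separately, using the pointer-chasing structure of adaptive evaluation for the upper bound and the Hong--Kung tiling argument for the converse. The model is a machine with a cache of $M$ words, random-access bandwidth $\mathit{BW}$ (bytes/s for independent misses at full concurrency), and arithmetic rate $F$. An evaluation is \emph{coherent} if the vector operands it needs are known before any earlier evaluation in the same block returns. I will formalize a builder as a sequence of rounds, where the operand set of round $i$ is a function of the results of rounds $<i$. The \emph{batch width} $w_i$ is the number of evaluations in round $i$ whose operands are fixed in advance.

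For the upper bound, I would first argue that for an adaptive builder (incremental such as Vamana and HNSW, or batched such as \graft{} and \fgraft{}), each round selects one vertex $q$ by a data-dependent rule: beam heap, visited stamp, prune test. Hence $w_i = O(1)$ per dependency chain. Two consequences follow. First, no operand is reused across a tile, so each evaluation incurs $\Omega(4d)$ bytes of traffic whenever the working set exceeds $M$. Summing over all chains running in parallel gives aggregate rate $\le \mathit{BW}/4d$. Second, the bookkeeping that chooses $q$ lies on the chain's critical path. With $c_{\text{adapt}}$ cycles per step and the available hardware contexts, the rate is therefore $\le f_{\text{clk}}/c_{\text{adapt}}$ when $f_{\text{clk}}$ is read as the aggregate cycle rate. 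Taking the minimum gives the bound.

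For the black-box case, the same bound follows even when the order is fixed in advance. A black box admits no decomposition into inner products, so evaluations cannot share partial sums, and each one still reads both operands. The bound depends only on the per-evaluation rate, not on the number of evaluations. This gives the phrase ``independently of how few evaluations it performs'': reducing the number of pairs scales time linearly but leaves the rate fixed.

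For the converse, I take a decomposable distance $\|p-q\|^2=\|p\|^2+\|q\|^2-2\langle p,q\rangle$ on a pair set fixed in advance. I would tile the pairs into $\sqrt{M}\times\sqrt{M}$ blocks and compute each block as a GEMM. Hong--Kung~\cite{hong1981io} gives I/O of $\Theta(\text{pairs}\cdot d/\sqrt{M})$, so intensity grows with $\sqrt{M}$ and the rate reaches $F/2d$. The arithmetic ceiling then binds, and Table~\ref{tab:pipnn} and Table~\ref{tab:sweep} serve as witnesses that it is attained.

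The main obstacle is making ``not coherent'' rigorous enough that the lower bound on traffic cannot be evaded. An adaptive builder could speculatively prefetch all neighbors of the current frontier, which yields $w_i \approx R$. I would handle this by charging speculation its useful density. Evaluations whose results are discarded still cost bandwidth, so the effective rate stays below $\mathit{BW}/4d$ times the useful fraction. The measured 2--3\% lockstep density is the empirical instance of this argument. The statement is a model claim, so the argument is a proof within the stated cost model rather than a theorem about all hardware.
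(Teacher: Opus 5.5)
Your skeleton is the paper's own two-regime model of \S\ref{sec:model}. A fixed pair set is tiled, with Hong--Kung I/O and rate $F/2d$. A data-dependent next target costs one dependent load per evaluation, giving $\Theta(\text{pairs})$ I/O and rate $\le\min(\mathit{BW}/4d,\ f_{\text{clk}}/c_{\text{adapt}})$. The measurements of \S\ref{sec:wall} support it. Two of the steps you add to make it rigorous do not go through.

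\textbf{The black-box leg.} A GEMM shares no partial sums between its outputs either. What it exploits is operand reuse, and reuse is a property of the pair set, not of the distance. Run PiPNN's leaf brute force with a black-box distance in place of the inner product: each leaf is loaded once and all its pairs are evaluated from cache. So ``each evaluation still reads both operands'' implies no $\Theta(\text{pairs})$ memory traffic, and hence no $\mathit{BW}/4d$ ceiling. With the order fixed there is also no adaptive bookkeeping, and hence no $c_{\text{adapt}}$. What the black box forecloses is the dense arithmetic kernel, which leaves the oracle's own per-evaluation cost. That is how the paper treats it: for the black box the evaluation count \emph{is} the cost model (\S\ref{sec:related}, \S\ref{sec:conclusion}), and \S\ref{sec:model} derives the $\min$ only for builders whose next target depends on the last result. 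Your definition of coherence cuts against you here too. By it, fixed-order black-box evaluation is coherent, yet the claim calls it incoherent. The paper's implicit notion is ``evaluable as a dense block product,'' which needs fixed pairs \emph{and} decomposability.

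\textbf{The traffic lower bound on the adaptive side.} To begin with, $w_i=O(1)$ is false for a beam. One expansion evaluates every unvisited neighbor of the chosen vertex, up to $R$ operands known at once. These are ordinary, useful evaluations, not speculation. They fail to tile because a round is one query against $\le R$ targets, a matrix--vector product in which each target is used once, so no density charge is needed.

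The real escape is across chains. In coherent blocks, two thirds of all evaluations go to vertices that at least eight queries of the block also evaluate (\S\ref{sec:density}). A column-batched builder can load such a target once for every chain that requests it, without evaluating a single discarded pair. Your useful-density charge prices the padding of a dense $B\times|\bigcup_i V_i|$ tile, so it does not touch this. The chain ``no operand is reused, hence $\Omega(4d)$ bytes per evaluation, hence aggregate rate $\le\mathit{BW}/4d$'' therefore fails. The paper concedes exactly this: such a kernel removes the bandwidth term but not $c_{\text{adapt}}$, a gain bounded near $2\times$.

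The adaptive half must therefore rest on the bookkeeping term. Read it as core cycles consumed per (query, target) decision, which is how the paper measures it: in cache at 64 threads, where $5$--$6\times$ of the gap is already present. Do not read it as critical-path latency, which concurrent chains could overlap. $\mathit{BW}/4d$ is a ceiling only for builders that do not share target loads across chains.
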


\subsection{The wall against dimension}\label{sec:dim}

Every number so far comes from $d = 100$ or $d = 128$, and the model above
invites a conjecture that would limit the result: if the dense side runs at
$F/2d$ and the adaptive side at $\mathit{BW}/4d$, both fall as $1/d$, their
ratio is the $d$-independent $2F/\mathit{BW}$, and the $d$-independent
bookkeeping term $c_{\text{adapt}}$ becomes relatively cheaper as $d$ grows
--- so the wall should \emph{shrink} at embedding dimensions, which is where
it would matter most. We measured it, on an otherwise idle machine, and the
conjecture is false in both of its halves.

\begin{table}[t]
\centering\small
\caption{Cost per pair evaluation against dimension, 64 threads on the idle
machine. Dense is a float32 OpenBLAS product of the same shape at each $d$;
adaptive is a beam's rate, recovered as (queries/s)~$\times$~(evaluations per
query) at recall $\approx 0.95$, so that no new instrumentation enters.
$W$ is the ratio.}
\label{tab:dim}
\resizebox{\linewidth}{!}{%
\begin{tabular}{@{}lrrrrr@{}}
\toprule
 & $d$ & dense cyc/pair & \% of FMA peak & adaptive cyc/pair & $W$ \\
\midrule
Vamana R64 (Deep)   & 96  & 24.6  & 24\% & 368   & 15.0 \\
Vamana R100 (GloVe) & 100 & 22.8  & 27\% & 396   & 17.3 \\
Vamana R64 (SIFT)   & 128 & 25.6  & 31\% & 491   & 19.2 \\
--- & 256 & 38.5 & 42\% & --- & --- \\
--- & 384 & 49.3 & 49\% & --- & --- \\
--- & 768 & 81.9 & 59\% & --- & --- \\
\textbf{Vamana R64 (GIST)} & \textbf{960} & 105.6 & 57\% & \textbf{3,960} & \textbf{37.5} \\
\midrule
\graft{} beam (GloVe) & 100 & 22.8 & 27\% & 782 & 34.3 \\
\textbf{\graft{} beam (GIST)} & \textbf{960} & 105.6 & 57\% & \textbf{8,765} & \textbf{83.0} \\
\bottomrule
\end{tabular}}
\end{table}

Table~\ref{tab:dim} fits two exponents. The adaptive side is \emph{linear} in
the dimension --- $d^{1.04}$ for Vamana's beam at matched degree $R{=}64$
between $d{=}128$ and $d{=}960$, $d^{1.07}$ for ours between $100$ and $960$
--- because it streams $4d$ bytes per dependent access and can block none of
them. The dense side is \emph{sublinear}, $d^{0.63}$, because a blocked
kernel's efficiency climbs with the dimension: from $27\%$ of the machine's
FMA peak at $d{=}100$ to $57\%$ at $d{=}960$. Per additional dimension the
adaptive side pays about $4.1$ cycles and the dense side about $0.10$. Hence
\[
  W(d) \;\propto\; d^{\,1.04-0.63} \;=\; d^{0.4},
\]
and the measured ratios confirm it: $W(960)/W(128) = 1.96$ at matched degree,
$2.16$ and $2.42$ for the unmatched pairings, and $1.51$ for a build-side
pairing of the same two corpora.

The conjecture failed because its first premise did. The bookkeeping term
does decay relative to the arithmetic exactly as the model says; it is simply
not what dominates at high $d$. What dominates is that at $d\approx100$
\emph{both} sides are partly overhead-bound, and therefore closer than they
will ever be again --- high dimension is where a blocked kernel finally gets
to be what it is, and where an adaptive one has nothing left to hide behind.

The consequence for this paper's headline is a qualification that runs the
favourable way: the $20$--$27\times$ measured at $d\approx100$ is a
\emph{conservative} figure for the $384$--$1536$-dimensional corpora that
dominate current practice, where the same measurement gives $37.5$ for a
tuned Vamana and $83$ for our own beam. It should nonetheless always be
quoted with its dimension attached.

\subsection{Can the beam be batched after the fact?}\label{sec:density}

The obvious escape is to run the beams of a block of neighboring points in
lockstep and evaluate the block against the union of their candidates as a
dense product. Its worth is decided by one number, the \emph{density}
$\sum_i |V_i| / (B\,|\bigcup_i V_i|)$, where $V_i$ is the set of vertices
whose distance to point $i$ the beam evaluates: at a $27\times$ kernel
advantage, density must exceed $3.7\%$ to break even. We logged $V_i$ for
$64$ \emph{coherent} blocks (maximal tree-0 SAT subtrees of at most $512$
points, median $178$) and $64$ random blocks of the same size.

\begin{table}[t]
\centering\small
\caption{Lockstep density of the existing harvest. $|V_i|$ = evaluations
per point; union over the block.}
\label{tab:density}
\begin{tabular}{@{}llrrrr@{}}
\toprule
harvest & block & $B$ & $|V_i|$ & $|\bigcup V_i|$ & density \\
\midrule
GloVe T16/\efh 400 & coherent & 178 & 11,547 & 384,776 & \textbf{0.028} \\
GloVe T16/\efh 400 & random & 178 & 12,427 & 875,260 & 0.014 \\
GloVe T16/\efh 100 & coherent & 178 & 3,917 & 191,922 & 0.020 \\
SIFT T4/\efh 400 & coherent & 214 & 2,742 & 86,375 & \textbf{0.032} \\
SIFT T4/\efh 400 & random & 214 & 2,823 & 406,245 & 0.007 \\
\bottomrule
\end{tabular}
\end{table}

Density is $2$--$3\%$ in every configuration (Table~\ref{tab:density}), and
larger blocks make it worse, since the union saturates at $n$. Coherence helps
only $2\times$ on GloVe and $4.5\times$ on SIFT. The per-phase profile removes
the last hope: in expansions $1$--$10$, right after the shared entry set,
adjacent points share $1.2\%$ of their evaluations. The roots have degree in
the hundreds and the argmin over them differs from point to point, so the
beams diverge immediately. What \emph{does} repeat is the target: in coherent
blocks, two thirds of all evaluations go to vertices that at least eight
queries of the block also evaluate. A column-batched kernel could load each
target once per $8$--$30$ uses and remove the bandwidth term, but not
$c_{\text{adapt}}$; its gain is bounded near $2\times$. Finally, the SAT
subtrees are not compact in the scaffold: a coherent block's one-hop union is
$\approx B \times \text{degree}$ (siblings' neighborhoods barely overlap; a
SAT subtree is a cone, not a ball), so the structural candidate sets that
would make a dense product exact by construction are not available from the
forest either.

The conclusion is not that dense construction is impossible for \graft-like
designs, but that it cannot be reached by refining the search. It requires
the pair set to be fixed before evaluation, which means leaves from a
distance-based clustering rather than tree subtrees, plus, for the long-range
edges that dense leaves cannot supply and that put PiPNN's graph
$41$--$45\%$ behind Vamana, a separate and cheap adaptive pass. That is a new
construction and a new baseline, outside this paper.

\section{Negative result: the scaffold's degree}\label{sec:cap}

Before batching the feedback we tried the direct lever on harvest cost,
capping the scaffold's out-degree at $K$ nearest (GloVe, T32/\efh 600,
laptop, distance counts). Cap $32$ cuts build work by $36\%$ but \dq{0.97}
degrades monotonically ($+3\%$ at $K{=}64$, $+24\%$ at $K{=}32$) and
completeness falls from $0.590$ to $0.543$; at matched build cost, lowering
\efh{} beats capping at every point of the sweep ($50.3$\G{} at \efh 400 gives
\dq{0.97} $14{,}047$ where $52.6$\G{} at $K{=}64$ gives $14{,}419$). Routability
needs most of the union's redundancy, which quantifies the substrate control
of~\cite{chavez2026graft}, and locates the gain of \S\ref{sec:results} in the
substrate's \emph{staleness}, not its size.

\section{Related work}\label{sec:related}

\paragraph{Search-based construction} The proximity-graph lineage runs from
the relative neighborhood graph~\cite{toussaint1980rng} and its sparse
navigable relaxations~\cite{arya1993rng} to the navigable small
world~\cite{malkov2014nsw} and HNSW~\cite{malkov2020hnsw}, NSG~\cite{fu2019nsg}
and Vamana/DiskANN~\cite{subramanya2019diskann}; FANNG~\cite{harwood2016fanng}
introduced the occlusion rule that all of them, and \graft, use to sparsify a
candidate set. These builders are \emph{incremental}: each inserted point
searches the graph built so far and mutates it, which is where the feedback
loop of \S\ref{sec:method} comes from and also where their serial dependency
chain comes from. ParlayANN~\cite{manohar2024parlayann} batches the
\emph{insertions} of Vamana and HCNNG into rounds and obtains determinism and
parallel scaling; \fgraft{} batches the \emph{feedback} into a substrate that
begins as a forest union and is frozen within each block, and quantifies what
the feedback buys against what the substrate must supply. Surveys of the
graph family~\cite{wang2021survey} report neither build-side distance counts
nor cost per evaluation, the two quantities this paper measures.

\paragraph{Construction without search} A second lineage fixes its pair sets
before evaluating them. NN-descent~\cite{dong2011nndescent} iterates
neighbor-of-neighbor comparisons over a whole graph in rounds;
EFANNA~\cite{fu2016efanna} and SPTAG~\cite{sptag2018} seed from randomized
$k$-d or balanced trees; HCNNG~\cite{munoz2019hcnng} and
PiPNN~\cite{rubel2026pipnn} cluster hierarchically and brute-force the leaves,
PiPNN with dense rank-update GEMMs; CAGRA~\cite{ootomo2024cagra} builds on a
GPU from a $k$-NN graph, and the GPU $k$-NN graph itself is a blocked
product~\cite{johnson2021gpu}. Every one of these evaluates distances at
arithmetic rate because the pairs are known before they are computed, and
every one of them needs the distance to decompose into inner products (or,
after quantization~\cite{jegou2011pq}, into table lookups). \S\ref{sec:pipnn}
measures the price they pay in graph quality and the price the other lineage
pays in throughput.

\paragraph{Metric-space indexing} The rootstock of \graft{} is the spatial
approximation tree~\cite{navarro2002sat} and its output a half-space proximal
graph~\cite{chavez2005hsp}; both belong to the metric-space setting
where the distance is a black box and the number of evaluations is the cost
model~\cite{chavez2001searching}. The wall of \S\ref{sec:wall} is the
statement that this cost model is exact for the black box and off by an order
of magnitude for vectors.

\paragraph{Performance models} The bandwidth term is the memory
wall~\cite{wulf1995memorywall}; the two-ceiling reading is the roofline
model~\cite{williams2009roofline}; the reuse argument that separates fixed
pair sets from adaptive ones is Hong and Kung's I/O bound for matrix
products~\cite{hong1981io}. Benchmarks measure indexes by query
throughput~\cite{aumuller2020annbenchmarks} and explain their difficulty by
local intrinsic dimensionality~\cite{aumuller2021lid}; to our knowledge no
prior work applies either performance model to index construction.

\section{Conclusion}\label{sec:conclusion}

Incremental construction's advantage over a frozen substrate is a feedback
loop, and the loop is separable from the serial dependency chain that
carries it: batching it into eight synchronous blocks recovers the quality of
four times as many frozen trees, does less distance work than tuned Vamana,
and keeps the graph a pure function of its inputs. The price is a denser
substrate and a higher cost per evaluation, and the result is $1.55\times$
Vamana's build time on GloVe, $1.73\times$ on SIFT, at the quality of the
frozen quality profile.

The larger finding is what bounds that cost. Every search-based builder,
including this one and the incremental ones it was measured against,
evaluates distances as dependent random accesses and runs an order of
magnitude below the machine's dense arithmetic, most of the gap present with
the data in cache, and the gap widening as $d^{0.4}$ --- so the corpora where
it costs most are the $384$--$1536$-dimensional embeddings that dominate
current practice, where we measure it at $37.5\times$ rather than $19\times$.
The beam cannot be batched after the fact. For a black-box
metric this is simply the cost model of the field and the count of
evaluations is the right measure. For vectors it is a wall, and the systems
that stand on the other side of it, PiPNN among them, do so by fixing their
pair sets before they evaluate them, not by evaluating fewer.

\section*{Reproducibility}
Code: \texttt{graft-ann} (public; the mutation, logging and counter patches
are in the paper's companion repository), the ParlayANN and PiPNN patches
(one printed counter; three added counters), the paired driver, the sweep and
density scripts, and every manifest and log behind the tables. Seeds are
fixed; every \graft-family arm produced an identical distance count and
completeness in all ten blocks.

\begingroup\sloppy\hbadness=10000
\bibliographystyle{elsarticle-num}
\bibliography{fgraft}
\endgroup

\end{document}